\newtheorem{theorem}{Theorem}
\newtheorem{lemma}{Lemma}
\newcommand{\DG}[1]{{\color{blue}#1}}
\def\b0{{\pmb{0}}}
\def\ba{{\boldsymbol{a}}}   
  \def\bq{{\boldsymbol{q}}} \def\bw{{\boldsymbol{w}}}
   \def\bx{{\boldsymbol{x}}}
   \def\by{{\boldsymbol{y}}}
   \def\bz{{\boldsymbol{z}}}
\def\bg{{\boldsymbol{g}}} 
\def\bA{{\boldsymbol{A}}} \def\bH{{\boldsymbol{H}}}
\def\bG{{\boldsymbol{G}}}
\newenvironment{proof}[1][Proof]{\begin{trivlist}
\item[\hskip \labelsep {\bfseries #1}]}{\end{trivlist}}
\begin{document}
%
% paper title
% can use linebreaks \\ within to get better formatting as desired
\title{A GENERALIZED LDPC FRAMEWORK FOR ROBUST AND SUBLINEAR COMPRESSIVE SENSING}
\author{Xu~Chen and Dongning~Guo \\
Department of Electrical Engineering and Computer Science\\
 Northwestern University, Evanston, IL, 60208, USA
\thanks{This work was supported in part by the National Science Foundation under Grant No. CCF-1423040.}
}

% make the title area
\maketitle

\begin{abstract}
Compressive sensing aims to recover a high-dimensional sparse signal from a relatively small number of measurements. In this paper, a novel design of the measurement matrix is proposed. The design is inspired by the construction of generalized low-density parity-check codes, where the capacity-achieving point-to-point codes serve as subcodes to robustly estimate the signal support. In the case that each entry of the $n$-dimensional $k$-sparse signal lies in a known discrete alphabet, the proposed scheme requires only $O(k \log n)$ measurements and arithmetic operations. In the case of arbitrary, possibly continuous alphabet, an error propagation graph is proposed to characterize the residual estimation error. With $O(k \log^2 n)$ measurements and computational complexity, the reconstruction error can be made arbitrarily small with high probability.
\end{abstract}

\section{Introduction}
\label{sec:intro}

%There are three important performance metrics for compressive sensing algorithms: the number of measurement required, the computational complexity and robustness to noise.

Compressive sensing aims to recover a high-dimensional sparse signal from a relatively small number of measurements~\cite{donoho2006compressed,candes2006near}. There are two different designs of the measurement matrices: random construction and deterministic construction. Convex optimization approaches have been first proposed to recover the noiseless signal with $O(k \log (n/k))$ random measurements~\cite{candes2005decoding}. Greedy algorithms that involve lower complexity have been proposed~\cite{tropp2007signal,needell2009cosamp,donoho2012sparse}. However, most of the algorithms that are based on random measurement matrix design inevitably involve a complexity of ${\rm poly}(n)$.

Inspired by the error control code designs, deterministic structured measurement matrices have been proposed to reduce the computational complexity to (near) linear time $O(n)$~\cite{xu2007efficient,indyk2008near}. In practice, when the signal dimension is many thousands or millions, even linear time complexity often becomes prohibitive. In response, sublinear compressive sensing based on second order Reed-Muller codes has been proposed, but the reconstruction error was not characterized~\cite{applebaum2009chirp,calderbank2010construction}.

Recently, compressive sensing schemes with a novel design of measurement matrix and sublinear recovery algorithms have been developed, requiring $O(k)$ measurements and arithmetic operations under the noiseless setting~\cite{pawar2012hybrid,bakshi2012sho}. In those schemes, the measurements are split into multiple groups and each group is a sub-vector, which are linear combinations of the {\em same} set of signal components. Treating the measurement groups as bins, the design matrix basically hashes the signals to different measurement bins, which is similar to the bipartite graph induced by low-density parity-check~(LDPC) code structure. In \cite{pawar2012hybrid} and \cite{bakshi2012sho}, the measurement vector in each bin is designed to carry the signal support information by leveraging the discrete Fourier transform (DFT) matrix. The design has been extended to the noisy case, involving $O \left( k \log^{1.3} n \right)$ measurements and computational complexity, with the limitation that the signal entries must lie in a known discrete alphabet.
%The recovery scheme is iterative. In each iteration, a nonzero signal is estimated from a singleton bin and its contributions to the connected measurement bins are cancelled out. The recovery scheme terminates in $O(k)$ iterations.

In this paper, we propose a generalized LDPC code inspired compressive sensing scheme to further reduce the the number of measurements required and computation complexity. The scheme adopts the sublinear recovery algorithm framework in \cite{pawar2012hybrid}. For the measurement matrix design, the scheme also adopts the LDPC structure to disperse the signal into measurement bins. The main difference is that each measurement bin is a subcode, where some recently developed capacity-achieving codes are utilized to encode the signal support.

Specifically, this paper makes the following contributions. First, our scheme is the first to achieve nearly order optimal $O(k \log n)$ noisy measurements and computational complexity for the case of known discrete alphabet. Second, the previous design based on DFT matrix is susceptible to quantization errors, while the proposed measurement matrix consists of only $\{0,\pm 1\}$ entries, which are easier and more robust in practice. Third, we propose an error propagation graph with error message passing rules to capture the error propagation for the case of arbitrary signals with unknown alphabet. Analysis shows that with $O \left( k \log^2 n \right)$ measurements and complexity the signal estimation error can be made arbitrarily small as $k$ increase. The proposed design and error propagation graph have potential applications in sparse Fourier transform~\cite{pawar2013computing} and Walsh-Hadamard transform with arbitrary signal alphabet~\cite{chen2015robust}.

\section{System Model}

Suppose $\bx \in {\mathbb{R}}^n$ is a $k$-sparse vector. The problem is to recover $\bx$ from the $m$-dimensional ($m \ll n$) measurement vector
\begin{align}\label{eq:system}
\by = \bA \bx + \bz
\end{align}
where $\bA \in \mathbb{R}^{m \times n}$ is the measurement matrix and $\bz$ is the noise vector with each entry being independently and identically distributed (i.i.d.) Gaussian variables with zero mean and variance $\sigma^2$.

Throughout the paper, we use bold capital letter and bold normal letter to denote a matrix and a vector, respectively. Given a matrix $\bA$, $A_{ij}$ denotes the entry located at the $i$-th row and $j$-th column, and $\ba_i$ denotes the $i$-th column. Given $i \in \{0, \cdots n-1 \}$, $(i)_2$ is the $\log n$-bit binary representation of $i$ with $0$ and $1$ mapped to $1$ and $-1$, respectively. For example, $n=3$, $(2)_2 = [1, -1, 1]$. Let $sgn(x) =1$ if $x \geq 0$ and $sgn(x)=-1$ otherwise.

\section{Measurement Matrix Design}

The LDPC inspired design of the measurement matrix is proposed in~\cite{pawar2012hybrid,bakshi2012sho}. In particular, the measurement matrix is constructed as
\begin{align}\label{eq:meas_matrix}
\bA = \bH \odot \bG
\end{align}
where $\bH \in \{0,1\}^{b \times n}$, $\bG \in \mathbb{R}^{c \times n}$ and the $\odot$ operator is defined as
\begin{align}
\bH \odot \bG  = \left[
\begin{array}{ccc}
H_{0,0} \bg_0 &  \cdots & H_{0,(n-1)} \bg_{n-1} \\
\vdots & \cdots & \vdots \\
H_{b-1,0} \bg_0 &  \cdots & H_{b-1,(n-1)} \bg_{n-1}
\end{array}
\right].
\end{align}
The number of measurements is thus $m = b \times c$.
For example,

\begin{align}
  \begin{bmatrix}
    1 & 0 & 1 \\
    0 & 1 & 1
  \end{bmatrix}
            \odot
            \begin{bmatrix}
              \bg_0 & \bg_1 & \bg_2
            \end{bmatrix}
                              =
                              \begin{bmatrix}
                                \bg_0 & 0 & \bg_2 \\
                                0 & \bg_1 & \bg_2
                              \end{bmatrix}.
\end{align}
\begin{comment}
when $b = 2$, $n=3$,
\begin{align}
\bH= \left[
\begin{array}{ccc}
1 & 0 & 1 \\
0 & 1 & 1
\end{array}
\right],
\end{align}
the measurement matrix is
\begin{align}
\bA= \bH \odot \left[\bg_0, \bg_1, \bg_2 \right]
= \left[
\begin{array}{ccc}
\bg_0 & 0 & \bg_2 \\
0 & \bg_1 & \bg_2.
\end{array}
\right].
\end{align}
\end{comment}

In fact, $\bH$ is inspired by the parity-check matrix of LDPC codes. The relationship between the signal entries and the measurements can be represented by a bipartite graph. In the bipartite graph, there are $n$ left nodes with $x_i$ corresponding to the $i$-th left nodes, and $b$ right nodes, which are also referred to as bins. The $i$-th left node is connected with the $j$-th bin if $H_{ij} = 1$. The measurement vector is thus grouped into $b$ sub-vectors as $\by = \left[ \by_0^{\dagger}, \cdots, \by_{b-1}^{\dagger} \right]^{\dagger}$, where $\by_j \in \mathbb{R}^c$ is the $j$-th bin value given by
\begin{align}
\by_j = \sum_{i=0}^{n-1} H_{ij} x_i \bg_i + \bz_j.
\end{align}
Fig.~\ref{fig:bipartite_graph} illustrates the bipartite representation between signals and measurements. In this paper, we construct $\bH$ from the ensemble of left $d$-regular bipartite graph $\mathcal{G}_d(k,b)$, where every signal is connected to $d$ measurement bins uniformly at random.

\begin{figure}
  % Requires \usepackage{graphicx}
  \centering
  \includegraphics[width=2in]{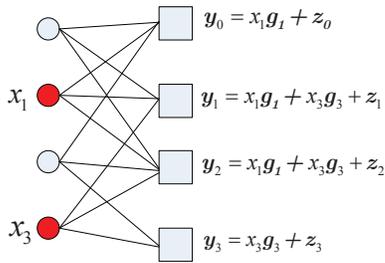}\\
  \caption{Example of the bipartite graph. Left nodes correspond to signals and right nodes correspond to measurement bins. The left nodes marked in red are nonzero signal components.}\label{fig:bipartite_graph}
\end{figure}

The recovery algorithms adopts the framework proposed in~\cite{li2014sub}. The recovery algorithm calls for a robust bin detection, which can 1) identify if a measurement bin is connected to no nonzero signal component (zeroton), to a single nonzero component (singleton) or to multiple nonzero components (multiton); 2) robustly estimate the signal index and value from singleton bins. It can be proved that by some proper $b = O(k)$, the recovery algorithm can correctly estimate $\bx$ with high probability if we have a robust bin detection. The key challenge is how to design $\bG$ to achieve robust bin detection.

In previous works~\cite{bakshi2012sho,pawar2012hybrid,li2014sub}, $\bG$ is constructed based on the DFT matrix. The signal index information $i$ is embedded in the phase difference between the entries of $\bg_i$. In this paper, we propose a new design of $\bG$, which only consists of $\{\pm 1\}$ entries and is more robust to noise and quantization errors.

We motivate the design using a simplified setting. Assume 1) a measurement bin $j$ is known to be a singleton, 2) there is no noise, and 3) the sign of the signal $x_i$ that is hashed to bin $j$ is known. The question is how can we design $\bG$ to detect the signal index $i$ and its value? Let $(i)_2$ be the $\log n$-bit binary representation of $i$. If $\bg_i = (i)_2$, then the signal index can be easily recovered based on the signs of each entry in $sgn(x_i) \by_j = | x_i | \bg_i$. A robust design of $\bG$ is to overcome the challenges posed by the three assumptions. First, we let $\bar{\bg}_i$ to be an all-one vector such that the signs of $x_i$ can be estimated. Second, $\tilde{\bg}$ is designed to be coded bits of $(i)_2$ for robust estimation of $(i)_2$ under the noisy setting. The sub-vector length is $\lceil \log n \rceil / R$, where $R$ is the code rate of the applied low-complexity error-control code \cite{barg2004error}. Third, we let $\dot{\bg}_i$ be a binary vector with each entry generated according to i.i.d. Rademacher distribution for singleton verification.

In all, the $i$-th column of $\bG$ consists of three sub-vectors:
\begin{align}
\bg_i = \left[\tilde{\bg}_i^{\dagger}, \bar{\bg}_i^{\dagger}, \dot{\bg}_i^{\dagger}  \right]^{\dagger}
\end{align}
%\begin{align}
%\bg_i = \left[
%\begin{array}{cc}
% \tilde{\bg}_i \\
%  \bar{\bg}_i\\
%  \dot{\bg}_i
%\end{array}
%\right],
%\end{align}
where $\tilde{\bg}_i \in \{\pm 1 \}^{c_0% \times 1
}$, $\bar{\bg}_i \in \{\pm 1 \}^{c_1% \times 1
}$, and $\dot{\bg}_i \in \{\pm 1\}^{c_2% \times 1
}$. Accordingly, the measurement vector for bin $j$ can be split into
three sub-vectors:
\begin{align}
\by_j &=
\left[
\begin{array}{cc}
 \tilde{\by}_j \\
 \bar{\by}_j \\
  \dot{\by}_j
\end{array}
\right]
=
\left[
\begin{array}{cc}
 \sum_{i=0}^{n-1} H_{ij} x_i  \tilde{\bg}_i \\
 \sum_{i=0}^{n-1} H_{ij} x_i  \bar{\bg}_i \\
 \sum_{i=0}^{n-1} H_{ij} x_i  \dot{\bg}_i
\end{array}
\right] +
\left[
\begin{array}{cc}
 \tilde{\bz}_j \\
 \bar{\bz}_j\\
 \dot{\bz}_j
\end{array}
\right] .
\end{align}
In our design, we choose $b = O(k)$, $c = O(\log n)$ and $c= O(\log^2 n)$ for signals with known discrete alphabet and arbitrary alphabet, respectively.

In the bipartite graph, each measurement bin can be regarded as a super check node where a subcode is further used to encode the index information of the signals. The structure is similar to that of generalized LDPC codes~\cite{miladinovic2008generalized}. The well-established low-complexity capacity-approaching point-to-point codes can serve as subcodes to enhance the robust design.

\section{Recovery Algorithm Design}
\label{sec:algo}

We adopt the recovery algorithm framework proposed in~\cite{pawar2012hybrid}. The algorithm is implemented in an iterative ``peeling" process. In every iteration, a singleton bin is identified. The index and value of the signal that is hashed to the singleton bin are estimated. The contribution of the estimated signal to the other connected bins are cancelled out (peeled off).

The main difference of our work lies in the signal support estimation
from a singleton bin, referred to as the {\em singleton test}, which is
described in Algorithm~\ref{algo:bin_detect}. In particular, for some $x_i$ that is hashed to a singleton bin $j$, $\tilde{\bg}_i \in \{ \pm 1\}^{c_0}$ encodes the support information $(i)_2$. Suppose the signal sign estimation is correct, i.e., $s = sgn(x_i)$. Without noise, $s \tilde{\by}_j = |x_i| \tilde{\bg}_i$ and thus $sgn(s \tilde{\by}_i)$ is exactly $\tilde{\bg}_i$. Under the noisy setting, some of the signs are flipped due to noise, which can be regarded as transmission over the binary symmetric channel (BSC). With low-complexity codes used as subcodes, $(i)_2$ can be recovered by inputting $sgn(s \tilde{\by}_i)$ to the corresponding decoder with complexity $O( c_0)$~\cite{barg2004error}.

The overall recovery algorithm is described as follows.

% \textit{Global singleton detection:} For every bin, estimate the
% signal index and value $(i, \hat{x}_i)$
First, run the singleton test on every bin
 using Algorithm~\ref{algo:bin_detect}.
% Put the signal indices in set $L$ and remove the singleton bins.
Let $L$ denote the set of estimated signal indices.  Remove the
  declared singleton bins.

Then, repeat the following
until $L = \emptyset$:
\begin{compactitem}
\item Select arbitrary %Take
  $i \in L$ and %update $L = L \backslash i$.
  remove $i$ from set $L$.
\item For every remaining %unremoved
  bin $j$ with $H_{ij} = 1$, perform the following:
  \begin{compactenum}
  \item Subtract the signal node $i$ value from bin $j$: $\by_j \leftarrow \by_j - \hat{x}_i \bg_i$.
    %\begin{align}\label{eq:subtract}
    %  \by_j \leftarrow \by_j - \hat{x}_i \bg_i.
    %\end{align}
  \item Run the singleton test on the bin
      % Estimate the signal index and value $(i, \hat{x}_i)$
      using Algorithm~\ref{algo:bin_detect}.
      If it is a singleton, add the output index to
        $L$ and remove bin $j$. % if it is a singleton.
  \end{compactenum}
\end{compactitem}

% {\em Complexity analysis}: The global singleton detection performs
% Algorithm~\ref{algo:bin_detect} on every measurement bin. Since
Algorithm~\ref{algo:bin_detect} has a computational complexity of
$O(c)$, where $c = c_0 + c_1 + c_2$.
%, and there are $b$ bins, the complexity is $O(b c)$.
Performing the singleton test on all $b$ bin takes complexity
  $O(bc)$.  In each subsequent
%At each of the following
iteration, we perform Algorithm~\ref{algo:bin_detect} only on every
(remaining)
connected
% unremoved
bin of a recovered signal component. Since the left-node degree is constant, each iteration involves computational complexity of $O(c)$. It will be proved that the algorithm terminates after $O(k)$ iterations with high probability. The computational complexity of all the iterations involved is thus $O(k c)$. With the choice of $b$ and $c$, the total complexity is $O( k \log n)$ and $O \left( k \log^2 n \right)$ for signals with discrete alphabet and arbitrary alphabet, respectively.
%
% \begin{algorithm}
%\caption{Sublinear recovery scheme($\by$)}
%\label{algo:cs}
%\begin{algorithmic}[]
%\STATE \textbf{Input}: $\by = [\by_0^{\dagger} \cdots \by_{b-1}^{\dagger}]^{\dagger}$, measurement vectors of $b$ bins.
%\STATE \textbf{Output}: $\hat{\bx} \in \mathbb{R}^{N}$, the estimated signal.
%\STATE \textbf{Initialize}: Set $\hat{\bx} = \mathbf{0}$, $L = \emptyset$ and $B = \{0, \cdots, b-1 \}$.
%\STATE \textit{First iteration:} $t \leftarrow 1$
%\FOR{$j =0, \cdots, b-1$ }
%%\STATE Estimate nonzero signal index and value as:
%\STATE $(i_j,z_{i_j}) \leftarrow $ Singleton-Estimate($\by_j$).
%\IF {$i_j \neq \emptyset$ and $i_j \notin S(t)$}
%\STATE $S(1) \leftarrow S(1) \cup i_j$
%\STATE $L \leftarrow L \cup i_j $.
%\STATE $\hat{x}_{i_j} \leftarrow z_{i_j}$.
%\STATE $B \leftarrow B \backslash j$.
%\ENDIF
%\ENDFOR
%
%\STATE \textit{Repeat until $B = \emptyset$ or $L = \emptyset$:}
%\STATE $t \leftarrow t + 1$.
%\STATE {Take some $ i \in L$.}
%\STATE {$L \leftarrow L \backslash i$}.
%\FOR {$ j \in B$ with $ H_{ij} = 1$}
%\STATE {\em Peeling}: $\by_j \leftarrow \by_j - \hat{x}_i \bg_i$.
%\STATE $(i_j,z_{i_j}) \leftarrow $ Singleton-Estimate($\by_j$).
%\IF {$i_j \neq \emptyset$ and $i_j \notin \cup_{k=1}^t S(k)$}
%\STATE $S(t) \leftarrow S(t) \cup i_j$.
%\STATE $L \leftarrow L \cup i_j $.
%\STATE $\hat{x}_{i_j} \leftarrow z_{i_j}$.
%\STATE $B \leftarrow B \backslash j$.
%\ENDIF
%\ENDFOR
%\end{algorithmic}
%\end{algorithm}

\begin{algorithm}
% \caption{Singleton-Estimate ($\by$)}
\caption{Singleton test} % ($\by$)}
\label{algo:bin_detect}
\begin{algorithmic}[]
\STATE \textbf{Input}: Bin measurements, %values
$\by = [\tilde{\by}^{\dagger} \enskip \bar{\by}^{\dagger} \enskip \dot{\by}^{\dagger}]^{\dagger}$.
% \STATE \textbf{Output}: Estimated nonzero signal index and value $(i,z_i)$.
\STATE \textbf{Output}: index and estimate, $(i,z_i)$.
\IF {$  || \dot{\by} ||^2 < c_2 (1+ \tau)  \sigma^2$}
\STATE Claim bin is a zeroton and return $(i,z_i) \leftarrow (\emptyset, 0)$.
\ENDIF
\STATE \textit{Signal sign estimation:}
$
s \leftarrow sgn (\bar{\bg}^{\dagger} \bar{\by}).
$
\STATE \textit{Signal index estimation:}
$
i  \leftarrow \text{BSC-Decoder} ( sgn( s \cdot \tilde{\by} ) ).
$
\STATE \textit{Singleton verification:}
\STATE $z' \leftarrow \frac{1}{c_2} \dot{\bg}_{i}^{\dagger} \dot{\by}$.
\IF {$|| \dot{\by} - z' \dot{\bg}_{i} ||_2^2 \leq (c_2-1) (1+\tau) \sigma^2  $ }
\STATE {Return $i$ and }
\begin{align*}
z_i & \leftarrow \left\{
\begin{array}{cc}
\frac{1}{c} \bg_i^{\dagger} \by & \text{ arbitrary alphabet}\\
\arg\min_{z' \in \mathcal{X}} ||\dot{\by} - z' \dot{g_{i}}||^2 & \text{discrete alphabet}
\end{array}
\right.
\end{align*}
\ELSE
\STATE {Claim bin is multiton and return $(i,z_i) \leftarrow (\emptyset,0)$.}
\ENDIF
\end{algorithmic}
\end{algorithm}

\section{Main Results and Proof}
\label{sec:proof}

%The main results of this paper are summarized in the following two theorems.
\begin{theorem}\label{thm:cs_known}
  Given %For
  any $\epsilon >0$, there exists $k_0>0$ %, c_1, c_2>0$
  such that for every $k > k_0$ and %every $m > c_1 k \log n$, for
  every $n$-dimensional $k$-sparse signal $\bx$ whose entries take
  their values in a known discrete alphabet, the proposed scheme
  achieves $\mathsf{P} \{\hat{\bx} \neq \bx \} < \epsilon$.
  The number of measurements required is $O(k\log n)$.  The
    computational complexity is $O(k\log n)$
    % of using fewer than $c_2 k \log n$
    arithmetic operations.
\end{theorem}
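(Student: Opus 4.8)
\emph{Strategy.} I would split the argument into a graph-level step and a bin-level step and glue them with a union bound. \emph{Step 1: reduction to reliable bin detection.} Draw $\bH\sim\mathcal G_d(k,b)$ with a suitable constant left-degree $d$ and $b=\eta k$ for a suitable constant $\eta$. Conditioning on the (random) bipartite graph, suppose for the moment that every call to Algorithm~\ref{algo:bin_detect} behaves ideally: it labels each processed bin correctly as zeroton/singleton/multiton, and on a genuine singleton it returns the exact index $i$ and the exact alphabet value $x_i$. Then the iterative peeling of Section~\ref{sec:algo} clears the subgraph induced on the $k$ active nodes and recovers $\bx$ exactly in $O(k)$ iterations, except on an event $E_{\mathrm{graph}}$ (over $\bH$) of probability at most $\epsilon/2$. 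This is exactly the density-evolution / ``no residual stopping set'' argument for left-regular sparse bipartite graphs underlying \cite{pawar2012hybrid,li2014sub}, and I would cite it rather than reprove it; $\eta$ is chosen so that $\mathsf P(E_{\mathrm{graph}})\to 0$.

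\emph{Step 2: per-test failure probability.} Let $x_{\min}=\min_{a\in\mathcal X\setminus\{0\}}|a|>0$ and let $d_{\min}$ be the minimum gap of $\mathcal X$; both are fixed constants because $\mathcal X$ is fixed and known. Fix $\tau$ with $0<\tau<x_{\min}^2/\sigma^2$. I would show that any single invocation of Algorithm~\ref{algo:bin_detect} is ``bad'' --- misclassifies the bin, or on a genuine singleton outputs a wrong index or a wrong value --- with probability at most $n^{-\gamma}$, where $\gamma$ grows with the (tunable) constants hidden in $c_0,c_1,c_2=\Theta(\log n)$. The bad event is contained in a union of:
\begin{compactitem}
\item a zeroton mislabel: $\|\dot\bz\|^2\ge c_2(1+\tau)\sigma^2$ for a true zeroton, or $\|x_i\dot\bg_i+\dot\bz\|^2<c_2(1+\tau)\sigma^2$ for a true singleton; both are $\chi^2$-type concentration events with deviation probability $e^{-\Theta(c_2)}$, the singleton case using $x_{\min}^2>\tau\sigma^2$;
\item a sign error: $\bar\bg^\dagger\bar\by=c_1 x_i+\mathbf 1^\dagger\bar\bz$ is Gaussian, mean $c_1 x_i$, standard deviation $\sqrt{c_1}\sigma$, so $\mathsf P(s\neq sgn(x_i))\le e^{-c_1 x_{\min}^2/(2\sigma^2)}$;
\item an index error: conditioned on a correct sign, $sgn(s\tilde\by)$ is the codeword $\tilde\bg_i$ of $(i)_2$ seen through a BSC of crossover $q=Q(x_{\min}/\sigma)<\tfrac12$; a fixed-rate $R<1-h(q)$ low-complexity code from \cite{barg2004error} decodes in error with probability $e^{-\Theta(c_0)}$, and $c_0=\Theta(\log n)$ supplies at least $n$ codewords;
\item a singleton declared multiton: $\dot\by-z'\dot\bg_i$ is the part of $\dot\bz$ orthogonal to $\dot\bg_i$, of squared norm $\sigma^2\chi^2_{c_2-1}$, which lies below $(c_2-1)(1+\tau)\sigma^2$ with high probability;
\item a multiton declared singleton (the delicate case): for a true multiton with active set $S$, $|S|\ge2$, and \emph{whatever} index $\hat\imath$ the decoder emits, $\|\dot\by-z'\dot\bg_{\hat\imath}\|^2$ concentrates around $c_2\bigl(\sigma^2+\sum_{i\in S\setminus\{\hat\imath\}}x_i^2\bigr)\ge c_2(\sigma^2+x_{\min}^2)$, the only fluctuations being the $O(\sqrt{c_2\log k})$ contributions of the Rademacher cross products $\dot\bg_i^\dagger\dot\bg_{i'}$; since (balls in bins) whp every bin touches $O(\log k)$ active nodes, a union bound over the $O(\log^2 k)$ such pairs pushes the residual above $(c_2-1)(1+\tau)\sigma^2$;
\item a value error (discrete alphabet): given the correct index, $\tfrac1{c_2}\dot\bg_i^\dagger\dot\by=x_i+\mathcal N(0,\sigma^2/c_2)$ lies within $\tfrac12 d_{\min}$ of $x_i$ with probability $1-n^{-\Theta(1)}$ once the constant in $c_2$ is large enough, so the $\arg\min$ over $\mathcal X$ returns $x_i$.
\end{compactitem}
Taking the constants in $c_0,c_1,c_2$ large (depending on $\epsilon,\mathcal X,\sigma,d$) makes each term at most $n^{-\gamma}$ for any desired $\gamma$.

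\emph{Step 3: assembly and complexity.} The algorithm runs at most $b$ tests in the initial sweep and at most $d$ further tests per peeled component, hence at most $b+dk=O(k)$ tests overall; with $\gamma>1$ and $k>k_0$ large, a union bound gives probability at most $O(k)\,n^{-\gamma}\le\epsilon/2$ that any of them is bad. Off this event and off $E_{\mathrm{graph}}$, Step 1 yields $\hat\bx=\bx$, so $\mathsf P\{\hat\bx\neq\bx\}\le\epsilon$. Finally $m=bc=O(k)\cdot O(\log n)=O(k\log n)$, and the arithmetic count is $O(kc)=O(k\log n)$ as tallied in Section~\ref{sec:algo}. \emph{Main obstacle.} The crux is the multiton-rejection bound: because the BSC decoder may output an arbitrary index on a superposition, the verification residual must be bounded below \emph{uniformly} over that output, which forces control of all pairwise Rademacher inner products on a bin and forces a single $\tau\in(0,x_{\min}^2/\sigma^2)$ to serve the zeroton test, the verification test, and the discrete rounding simultaneously. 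A lesser nuisance is that Step 1 is stated for ideal detection while detection only succeeds with high probability; this is harmless because only $O(k)$ detections ever occur, so the Step 3 union bound absorbs it.
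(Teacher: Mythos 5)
Your proof is correct, and its skeleton (a graph-level event guaranteeing that peeling completes given ideal bin detection, a per-test failure probability of $n^{-\gamma}$, and a union bound over the $O(k)$ tests) is the same as the paper's. The main difference is the route: the paper proves Theorem~\ref{thm:cs_general} first --- introducing the error propagation graph, the hypergraph component lemma (trees/unicyclic components of size $O(\log k)$), and the variance bound \eqref{eq:var_ub} --- and then obtains Theorem~\ref{thm:cs_known} ``as a special case,'' whereas you prove the discrete-alphabet theorem directly, exploiting the fact that exact value recovery over a known alphabet makes the whole error-propagation machinery unnecessary; this is cleaner and self-contained for this statement, and it also makes explicit the zeroton/multiton-rejection and value-rounding bounds that the paper defers to \cite{li2014sub}. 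What the paper's route buys is a single analysis covering both theorems; what yours buys is a direct argument with only $c=O(\log n)$ and no need for Lemma~2 beyond ``peeling terminates.'' One quantitative caveat in your multiton-rejection step: since the BSC decoder may emit \emph{any} of the $n$ indices, the uniform lower bound on $\|\dot\by-z'\dot\bg_{\hat\imath}\|^2$ requires controlling cross products $\dot\bg_{\hat\imath}^\dagger\dot\bg_i$ (and $\dot\bg_{\hat\imath}^\dagger\dot\bz$) over all $n$ candidate columns, not just the $O(\log k)$ active nodes of the bin, so the fluctuation is $O(\sqrt{c_2\log n})$ rather than $O(\sqrt{c_2\log k})$; this is harmless because $c_2=\Theta(\log n)$ with a tunable constant, but the union bound should be stated over $n$ candidates. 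Your test count $O(k)$ is in fact tighter than the paper's $O(k^2)$ union bound, and both suffice.
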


\begin{theorem}\label{thm:cs_general}
  Given %For
  any $\delta, \epsilon >0$, there exists $k_0>0$ %, c_1, c_2>0$
  such that for every $k > k_0$ and %every $m > c_1 k \log^2 n$, for
  every $n$-dimensional $k$-sparse signal $\bx$ with $|x_i| \geq
  \delta$ for every %where
  $i \in {\rm supp} (\bx)$, the proposed scheme achieves $\mathsf{P}
  \{ {\rm supp}(\hat{\bx}) \neq {\rm supp}(\bx) \}  < \epsilon$ and
  $\mathsf{P} \{ |\hat{x}_i - x_i|^2 \geq \epsilon \}  < \epsilon$
  for every %, \forall
  $i \in {\rm supp}(\bx) $.
  The number of measurements required is $O(k\log^2 n)$.  The
    computational complexity is $O(k\log^2 n)$
    % of using fewer than $c_2 k \log n$
    arithmetic operations.
%  using fewer than $c_2 k \log^2 n$ arithmetic operations.
\end{theorem}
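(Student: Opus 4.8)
The plan is to establish three facts and combine them. First, on the random ensemble $\mathcal{G}_d(k,b)$ with constant left-degree $d$ and $b=\Theta(k)$ bins, the peeling decoder equipped with an ideal per-bin oracle — one that correctly labels zeroton/singleton/multiton and returns the exact index of a singleton — recovers the whole support within $O(k)$ rounds with probability $1-o(1)$; moreover, restricted to the $k$ nonzero left nodes, the depth-$\Theta(\log k)$ neighbourhood of any bin is a tree with branching bounded by an absolute constant $B$ with probability $1-o(1)$. This is the classical hypergraph-peeling / density-evolution analysis already used in~\cite{li2014sub,pawar2012hybrid} and it carries over unchanged; the content of the proof lies in the next two steps. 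Given all three facts, the support claim is immediate (the peeling terminates with every nonzero node recovered and every recovered index correct) and the per-coordinate value claim is exactly the third fact.

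Second, I would show that Algorithm~\ref{algo:bin_detect} realises such an oracle with per-invocation failure probability $\mathrm{poly}(1/n)$, so that a union bound over the $O(k)\le n$ invocations leaves the oracle-failure probability $o(1)$. With $\tau$ fixed in $(0,\delta^2/\sigma^2)$, the zeroton test is a $\chi^2$-concentration statement for $\|\dot\by\|^2$: for a true zeroton this concentrates at $c_2\sigma^2<c_2(1+\tau)\sigma^2$ so the bin passes, whereas for a singleton it concentrates at $c_2(x_i^2+\sigma^2)\ge c_2(\delta^2+\sigma^2)>c_2(1+\tau)\sigma^2$ so the bin fails, each up to probability $e^{-\Omega(c_2)}$. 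Given a non-zero bin, $s=sgn(\bar\bg^{\dagger}\bar\by)$ is wrong only if a Gaussian of variance $c_1\sigma^2$ exceeds $|x_i|c_1\ge\delta c_1$, of probability $e^{-\Omega(c_1)}$; with the correct sign, $sgn(s\tilde\by)$ is $\tilde\bg_i$ seen through a memoryless BSC of crossover probability $p<1/2$ (set by the Gaussian tail and $|x_i|\ge\delta$), so a capacity-approaching code of any fixed rate $R<1-H(p)$ and block length $c_0=\lceil\log n\rceil/R$ returns the true $(i)_2$ with probability $1-2^{-\Omega(c_0)}$~\cite{barg2004error}. Finally the singleton-verification test is a residual-norm concentration statement: a genuine singleton leaves $\|\dot\by-z'\dot\bg_i\|^2\approx(c_2-1)\sigma^2$, whereas a multiton — whose $\dot\by$ carries a second term $x_{i'}\dot\bg_{i'}$ with $|x_{i'}|\ge\delta$ — leaves, after projecting out $\dot\bg_i$, a residual $\gtrsim c_2\delta^2$ because two distinct Rademacher columns are nearly orthogonal, so it fails. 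Taking $c_0,c_1,c_2=\Omega(\log n)$ drives each of these probabilities below $n^{-2}$.

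Third — and this is the crux — I would track the value-estimation error. Writing $e_i=\hat x_i-x_i$: when the bin $j$ that peels $i$ is declared a singleton, its running measurement equals $x_i\bg_i-\sum_{i'\in S}e_{i'}\bg_{i'}+\bz_j$, where $S$ is the set of already-peeled nonzero neighbours of $j$, so $\hat x_i=c^{-1}\bg_i^{\dagger}\by_j$ satisfies $e_i=\nu_i-\sum_{i'\in S}\rho_{i,i'}e_{i'}$ with $\rho_{i,i'}=c^{-1}\bg_i^{\dagger}\bg_{i'}$ and $\nu_i=c^{-1}\bg_i^{\dagger}\bz_j$ — precisely the error-propagation graph of the introduction, acyclic because the edge $i'\to i$ respects the peeling order. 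A union bound over the $O(k)\le n$ pairs that ever occur gives, with probability $1-o(1)$, $|\rho_{i,i'}|\le\beta:=O(\sqrt{\log n/c})$ and $|\nu_i|\le\alpha:=O(\sigma\sqrt{\log n/c})$ simultaneously (sub-Gaussian tails for the Rademacher inner products, Gaussian tails for the noise). Processing the DAG in topological order with the constant branching bound $B$ from the first step, $|e_i|\le\alpha\sum_{\ell\ge0}(B\beta)^\ell=\alpha/(1-B\beta)$ provided $B\beta<1$. Choosing $c=\Theta(\log^2 n)$ — also the smallest order consistent with the $\Omega(\log n)$ demands of the second step — makes $\beta=O(1/\sqrt{\log n})=o(1)$, hence $B\beta=o(1)$ and $|e_i|=O(\sigma/\sqrt{\log n})$, so $|\hat x_i-x_i|^2=O(\sigma^2/\log n)<\epsilon$ once $k$ (hence $n\ge k$) exceeds a suitable $k_0$; one also checks that feeding these $o(1)$-size errors back into Algorithm~\ref{algo:bin_detect} perturbs the crossover probability $p$ and the test statistics of the second step only negligibly, so the oracle guarantee is self-consistent.

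Putting the pieces together, on the intersection of the three good events — of total probability $1-o(1)>1-\epsilon$ for $k>k_0$ — the algorithm recovers ${\rm supp}(\hat\bx)={\rm supp}(\bx)$ and every $|\hat x_i-x_i|^2<\epsilon$, which are the two conclusions; the measurement count is $m=bc=\Theta(k)\cdot\Theta(\log^2 n)=\Theta(k\log^2 n)$ and, as detailed in Section~\ref{sec:algo}, the complexity is $O(kc)=O(k\log^2 n)$. (Theorem~\ref{thm:cs_known} is strictly easier: the ML estimate over $\mathcal{X}$ makes $\hat x_i$ exact whenever the index is correct, so there is no residual error to propagate and $c=\Theta(\log n)$ suffices.) The main obstacle is the third step: showing that the per-step value errors do not amplify as they cascade through the peeling DAG requires simultaneous control of the many inner products $c^{-1}\bg_i^{\dagger}\bg_{i'}$, the bounded-branching structure of the random bipartite graph, and self-consistency with bin detection — and it is this balancing act, not the peeling analysis, that pins the block length at $c=\Theta(\log^2 n)$.
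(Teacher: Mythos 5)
Your overall architecture (bin-level oracle analysis with $\mathrm{poly}(1/n)$ failure probability, union bound over $O(k)$ peeling steps, and an acyclic error-propagation recursion $e_i=\nu_i-\sum_{i'}\rho_{i,i'}e_{i'}$ followed through the peeling order) matches the paper's. The gap is in your third step, which you yourself identify as the crux. Your contraction argument needs two structural facts that do not hold with probability $1-o(1)$: (a) that the depth-$\Theta(\log k)$ neighbourhood of \emph{every} bin, restricted to the $k$ nonzero left nodes, is a tree, and (b) that the branching is bounded by an absolute constant $B$. For (b), with $b=\Theta(k)$ bins and $k$ degree-$d$ nonzero left nodes, the maximum number of nonzero signals sharing a bin grows like $\log k/\log\log k$ (balls-in-bins maximum load), so no constant $B$ works uniformly; and for (a), cycles do occur in the pruned graph --- the correct statement, which the paper imports from random hypergraph theory (Lemma~\ref{lemma:hypergraph_component}), is only that all components are trees or \emph{unicyclic} and of size $O(\log k)$. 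With the true worst-case branching, your contraction parameter $B\beta$ is of order $\bigl(\log k/\log\log k\bigr)\cdot\sqrt{\log n/c}$, which with $c=\Theta(\log^2 n)$ and $k$ polynomial in $n$ behaves like $\sqrt{\log n}/\log\log n\to\infty$, so the geometric series $\sum_\ell (B\beta)^\ell$ you invoke does not converge and the bound $|e_i|\le\alpha/(1-B\beta)$ is not established.

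The paper avoids this worst-case product over levels entirely: using the trees-or-unicyclic structure it bounds the number of distinct error-propagation paths from any previously peeled node $\ell$ to $i$ by $\mathcal{P}(\ell,i)\le 2$ and the number of contributing ancestors by the component size $O(\log k)$, each path coefficient satisfying $|d_{\ell,p}|\le 1$; since the point errors $e_\ell$ are independent Gaussians of variance $\sigma^2/c$, the accumulated error $p_i$ is Gaussian with variance $O(\log k\,\sigma^2/c)$ as in \eqref{eq:var_ub}, and a Gaussian tail with $c=O(\log^2 n)$ finishes both the value claim and the self-consistency of the bin tests. Your argument can be repaired by replacing the constant-branching/contraction step with exactly this component-size and path-count bound (i.e., a bound on the \emph{total number} of paths rather than a per-level contraction), keeping your uniform high-probability bounds on $\nu_i$ and $\rho_{i,i'}$ or, more simply, the variance bookkeeping; as written, however, the key step fails.
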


We focus on the proof of Theorem~\ref{thm:cs_general}
due to space limitations. Theorem~\ref{thm:cs_known} follows as a special case. Unlike signals from discrete
alphabet, the %estimation of arbitrary signals would have
signal estimates have residual errors, which propagate to later iterations due to the peeling process. In this paper, we propose an \textit{error propagation graph} to keep track of the accumulated errors.

An error propagation graph for $x_i$ is a subgraph induced by the recovery algorithm, which contains the signal nodes that are estimated in the previous iterations and have paths to $x_i$. Fig.~\ref{fig:err_prop} illustrates the the error propagation graph for $x_2$.

\begin{figure}
  % Requires \usepackage{graphicx}
  \centering
  \includegraphics[width=6cm]{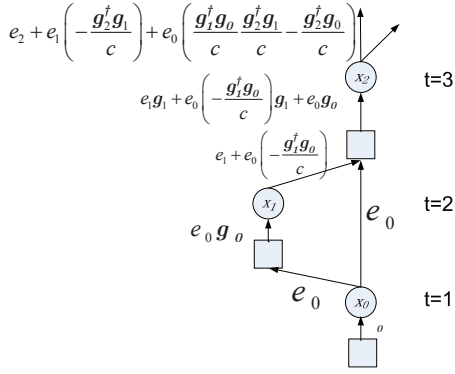}\\
  \caption{Error propagation graph for signal $x_2$.}\label{fig:err_prop}
\end{figure}

Define the estimation error of $x_i$ as
\begin{align}
p_i &= x_i - \hat{x}_i.
\end{align}
Let $m(i)$ be the measurement bin used to recover the signal index $i$. Define the point error of $x_i$ as
\begin{align}\label{eq:res_err}
e_i = - c^{-1} {\bg}_i^{\dagger} {\bz}_{m(i)} .
\end{align}
Then $e_i$ is a Gaussian variable with zero mean and variance $\sigma^2/ c$. We will keep track of $p_i$ using the error propagation graph.

Let $S(t)$ denote the signal indices that are recovered in the $t$-th iteration. Consider the estimation of $x_i$, $i \in S(1)$. The measurement vector of bin $m(i)$ and the residual estimation error are given by
\begin{align}
\by_{m(i)} &= x_i \bg_i + \bz_{m(i)} \\
p_i &= e_i.
\end{align}

Consider the estimation for $x_i$, $i \in S(2)$. With the peeling of $\hat{x}_\ell$, $\ell \in S(1)$, the updated measurement vector of $m(i)$ and the estimation error become
\begin{align}
\by_{m(i)} &= x_i \bg_i + \bz_{m(i)} + \sum_{\ell \in S(1): H_{\ell, {m(i)}} = 1} e_{\ell} \bg_{\ell} \\
p_i &= e_i +   \sum_{\ell \in S(1): H_{\ell, {m(i)}} = 1} e_{\ell} \left(- c^{-1} \bg_i^{\dagger} \bg_{\ell} \right),
\end{align}
where $|\bg_i^{\dagger} \bg_{\ell} /c| \leq 1$ for every realization of $\bG$.

The estimation error can be calculated recursively according to some message passing rules over the graph. In particular, let $p_i$ be the estimation error propagated from signal node $i$ and $\bq_j \in \mathbb{R}^{c}$ be the error vector propagated from the measurement bin $j$. The errors can be calculated according to the following rules:
\begin{align}
\label{eq:err_prop_rule1} p_i &= e_i + \sum_{j \in \text{in}(i) } \left( - c^{-1} \bg_i^{\dagger} \bq_j \right) \\
\label{eq:err_prop_rule2} \bq_j &= \sum_{i \in \text{in}(j)} p_i \bg_i,
\end{align}
where $\text{in}(i)$ denotes the indices of the measurement bins (signal nodes) incoming to signal node (measurement bin) $i$.

By induction and the error message passing rules~\eqref{eq:err_prop_rule1} and \eqref{eq:err_prop_rule2}, the error propagation effect is characterized by the following lemma.
\begin{lemma}
The estimation error of $x_i$, $i \in S(t)$, is calculated as
\begin{align}
p_i & = e_i +  \sum_{\ell \in \cup_{j =1}^{t-1} S(j) \cap D(i)} \sum_{p = 1}^{P(\ell, i) } e_{\ell} d_{\ell,p},
\end{align}
where $D(i)$ be the connected subgraph of the bipartite graph containing $i$, $\mathcal{P} (\ell, i)$ is the number of paths from $\ell$ to $i$ in $D(i)$, and $d_{\ell, p}$ is some coefficient depending on both $\bG$ and the path satisfying $|d_{\ell,p}| \leq 1$.
\end{lemma}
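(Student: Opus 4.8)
The plan is to prove the lemma by induction on the iteration index $t$, driven by the message–passing rules \eqref{eq:err_prop_rule1} and \eqref{eq:err_prop_rule2} together with the combinatorial structure of the error propagation graph. First I would record the structural facts that make the induction work. By construction of the peeling algorithm, a signal node $i$ recovered in iteration $t\geq 2$ is estimated from a single bin, so it has exactly one incoming bin in its error propagation graph, namely $m(i)$, i.e. $\text{in}(m(i)\hspace{-1pt})$ aside, $\text{in}(i)=\{m(i)\}$; moreover the signal nodes incoming to $m(i)$ are precisely those peeled off $m(i)$ before it was declared a singleton, each of which was therefore recovered in some strictly earlier iteration. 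Consequently the error propagation graph is a directed acyclic graph whose edges point from earlier–recovered nodes toward $i$, so the path count $P(\ell,i)$ is finite and the recursion below is well founded.

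For the base case $t=1$, a node $i\in S(1)$ has no incoming bin, so \eqref{eq:err_prop_rule1} gives $p_i=e_i$, which is the claimed formula with an empty outer sum. For the inductive step I would fix $i\in S(t)$ with $t\geq 2$, assume the formula for every signal recovered before iteration $t$, and combine the two rules at $i$:
\begin{align}
p_i \;=\; e_i - c^{-1}\bg_i^{\dagger}\bq_{m(i)} \;=\; e_i - c^{-1}\!\!\sum_{\ell\in\text{in}(m(i))}\!\!\bigl(\bg_i^{\dagger}\bg_\ell\bigr)\,p_\ell .
\end{align}
Each $\ell\in\text{in}(m(i))$ is recovered in some iteration $j_\ell\leq t-1$, so I substitute the inductive hypothesis for $p_\ell$ (using $D(\ell)=D(i)$, since $\ell$ and $i$ share the bin $m(i)$ and hence lie in the same connected component). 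Expanding the product, the term $e_\ell\bigl(-c^{-1}\bg_i^{\dagger}\bg_\ell\bigr)$ I would read as the contribution of the length-two path $\ell\to m(i)\to i$, and each term $e_{\ell'}\bigl(-c^{-1}\bg_i^{\dagger}\bg_\ell\bigr)d_{\ell',p'}$ as the contribution of the path from $\ell'$ to $\ell$ indexed by $p'$, prolonged by $m(i)$ and $i$.

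The crux is then a bijection: because $m(i)$ is the only bin entering $i$, every directed path in $D(i)$ from a node $\ell^{*}$ to $i$ arises in exactly one of these ways — its last signal vertex before $i$ is a unique $\ell\in\text{in}(m(i))$, and deleting $m(i)$ and $i$ leaves a path from $\ell^{*}$ to $\ell$. Hence the terms produced by the expansion are indexed precisely by pairs $(\ell^{*},p)$ with $\ell^{*}\in\bigl(\cup_{j=1}^{t-1}S(j)\bigr)\cap D(i)$ and $1\leq p\leq P(\ell^{*},i)$, matching the index set in the statement (the iteration bookkeeping $j_{\ell}\le t-1$ and $j_{\ell'}\le j_\ell-1$ shows every contributing node indeed lies in this set). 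Finally I would check the coefficient bound: each $d_{\ell^{*},p}$ so produced is a product of one factor $-c^{-1}\bg_a^{\dagger}\bg_b$ per bin node traversed along the path, with $\bg_a,\bg_b\in\{\pm1\}^{c}$; since $|c^{-1}\bg_a^{\dagger}\bg_b|\leq 1$ for every realization of $\bG$, the product satisfies $|d_{\ell^{*},p}|\leq 1$. The step I expect to be the main obstacle is this path-counting bijection — verifying that the double expansion of the recursion yields each path exactly once and that paths reaching $i$ through different predecessors $\ell$ do not collide — which is exactly where acyclicity of the graph and the uniqueness of the incoming bin $m(i)$ are used.
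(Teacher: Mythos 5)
Your proposal is correct and follows essentially the same route the paper intends: the paper merely asserts the lemma ``by induction and the error message passing rules \eqref{eq:err_prop_rule1}--\eqref{eq:err_prop_rule2}'' (illustrated only by the $x_2$ example), and your induction with the path-extension bijection and the per-bin factor bound $|c^{-1}\bg_a^{\dagger}\bg_b|\leq 1$ is exactly the omitted detail. Your reading of $\mathcal{P}(\ell,i)$ as counting directed paths in the recovery-induced (acyclic) error propagation graph is the intended interpretation and, if anything, slightly sharper than the paper's loose phrasing in terms of the undirected component $D(i)$.
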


Fig.~\ref{fig:err_prop} illustrates an example. The number of paths
from $x_0$ to $x_2$ is $\mathcal{P}(0,2) = 2$, with the corresponding
coefficients being $d_{0,1} = - %\frac{
c^{-1} \bg_2^{\dagger} \bg_0 %}{c_2}
$ and $d_{0,2} =  c^{-2} %\frac{
\bg_1^{\dagger} \bg_0 %}{c} \frac{
\bg_2^{\dagger} \bg_1 %}{c}
$. The number of paths from $x_1$ to $x_2$ is $\mathcal{P}(1,2) = 1$,
with the coefficients being $d_{1,1} = - c^{-1} %\frac{
\bg_2^{\dagger} \bg_1%}{c}
$.

We further bound the errors by leveraging results on random hypergraph. The bipartite graph induced by $\bH$ corresponds to a hypergraph where the left nodes and right nodes represent hyperedges and vertices. The hyperedge $i$ is incident on vertex $j$ if $H_{ij} = 1$. Then the random bipartite graph $\mathcal{G}_d (k,b)$ induces a $d$-uniform random hypergraph.
\begin{lemma}\cite{karonski2002phase}\label{lemma:hypergraph_component}
Suppose $b/k$ is some constant large enough, then with probability $1- O(1/k)$, $\mathcal{G}_d (k,b)$ contains only trees or unicyclic components, and the largest component contains $O( \log k)$ signal nodes.
\end{lemma}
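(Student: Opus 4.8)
The plan is to read the lemma off the phase-transition theory for sparse random hypergraphs. Identify the ensemble $\mathcal{G}_d(k,b)$ with the $d$-uniform hypergraph $\mathcal{H}$ whose $b$ vertices are the bins and whose $k$ hyperedges are the bin-neighbourhoods of the nonzero signals, each hyperedge an independent uniform $d$-subset of the bins. Writing $\beta := b/k$ for the large constant in the hypothesis, a fixed bin belongs to a given hyperedge with probability $\approx d/b$, so the expected number of hyperedges that share a bin with a fixed hyperedge is $\approx d^2 k/b = d^2/\beta < 1$ as soon as $\beta > d^2$; this is the subcritical regime. I would then split the lemma into the two standard consequences of subcriticality and handle each separately: (a) with probability $1-O(1/k)$ no component of $\mathcal{H}$ has cyclomatic number (excess) two or more, i.e.\ every component is a hypertree or is unicyclic; and (b) with probability $1-O(1/k)$ every component contains only $O(\log k)$ hyperedges.

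For (a) I would run a first-moment bound over connected sub-hypergraphs of excess at least two. A connected sub-hypergraph on $s$ hyperedges and $v$ vertices has excess $(d-1)s-v+1$, so excess $\ge 2$ forces $v \le (d-1)s-1$; there are at most $\binom{k}{s}$ ways to choose the hyperedges and $\binom{b}{v}$ ways to choose the vertex set, and conditioned on these the probability that all $s$ hyperedges fall inside the chosen $v$-set is at most $(v/b)^{ds}$. Summing $\binom{k}{s}\binom{b}{v}(v/b)^{ds}$ over $s\ge 1$ and $v\le (d-1)s-1$, each vertex below the hypertree count $(d-1)s+1$ costs a factor of order $k/b=1/\beta$ relative to the tree case, so the two missing vertices force a factor $1/\beta^2$; together with the factorials from the binomials the series converges and the whole sum is $O(1/k)$ once $\beta$ is large. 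Hence w.h.p.\ $\mathcal{H}$ has no component of excess $\ge 2$.

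For (b) I would explore the component of an arbitrary fixed hyperedge by breadth-first search on the hyperedge-adjacency graph. While the exploration has uncovered only $o(k)$ hyperedges, the number of new hyperedges found at a step is stochastically dominated by a sum of $d$ independent $\mathrm{Binomial}(k,d/b)$ variables, i.e.\ by an offspring law of mean $\le d^2/\beta\,(1+o(1)) < 1$; a subcritical Galton--Watson process with such offspring has total progeny with an exponential tail $\mathsf{P}\{\text{size}\ge s\}\le Ce^{-cs}$ for constants $C,c>0$ depending only on $d,\beta$. A union bound over the $k$ possible starting hyperedges gives $\mathsf{P}\{\text{some component has}\ge A\log k\text{ hyperedges}\}\le Ck\,e^{-cA\log k}=Ck^{1-cA}$, which is $O(1/k)$ for $A\ge 2/c$. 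So w.h.p.\ the largest component has $O(\log k)$ hyperedges --- i.e.\ $O(\log k)$ signal nodes --- and combined with (a) it then also has $O(\log k)$ bins.

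The step I expect to be the main obstacle is making the Galton--Watson domination in (b) rigorous in the presence of the dependence built into the ensemble: the $d$ bins of one hyperedge are sampled without replacement, and the exploration must delete already-discovered hyperedges and bins from the pool, which couples successive offspring counts and, near the edge of the subcritical window, threatens to inflate the effective mean. The usual fixes, which I would borrow from \cite{karonski2002phase}, are either to Poissonize / pass to the configuration-model version of the ensemble and control the total-variation distance on the relevant events, or to run the exploration against a worst-case (never shrinking by more than $o(k)$) pool so that the per-step Binomial domination holds uniformly until the component is already shown to be large; the counting in (a) is then routine. In fact the statement is precisely the subcritical phase established in \cite{karonski2002phase}, so a shortcut is simply to invoke that reference after checking that $\beta>d^2$ places $\mathcal{G}_d(k,b)$ in its subcritical range.
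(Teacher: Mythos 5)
Your proposal is sound, but note that the paper does not actually prove this lemma at all: it is imported wholesale from the cited reference (Karo\'nski--\L uczak on the phase transition in random $d$-uniform hypergraphs), so the paper's ``proof'' is exactly your final shortcut of checking that a large constant ratio $b/k$ places $\mathcal{G}_d(k,b)$ in the subcritical phase of that ensemble. What you do differently is reconstruct the two needed facts directly: a first-moment bound over connected sub-hypergraphs of excess at least two (your excess formula $(d-1)s-v+1$ and the count $\binom{k}{s}\binom{b}{v}(v/b)^{ds}$ are correct, and the sum is indeed $O(1/k)$, though the threshold on $\beta$ that makes the geometric series converge is more like $e^{d}(d-1)$ than your heuristic $d^2$ --- harmless, since the lemma only asserts ``some constant large enough''), and a subcritical exploration argument with exponential tails on component size plus a union bound over $k$ starting hyperedges, with the without-replacement dependence handled by observing that only $O(\log k)$ bins are ever removed before the component is declared large, so the per-step Binomial domination with mean $d^2/\beta\,(1+o(1))<1$ holds uniformly. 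This buys a self-contained argument with explicit constants tailored to the exact ensemble used here (fixed left-degree $d$, uniform $d$-subsets), whereas the paper's citation is shorter but leaves to the reader the verification that the ensemble and the $1-O(1/k)$ probability and $O(\log k)$ component-size statements really match the form proved in the reference. No genuine gap, provided you either carry out the convergence estimate in (a) for $\beta$ above the correct (larger) constant or simply fall back on the citation as the paper does.
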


Let $\mathcal{E}_H$ denote the event that the bipartite graph satisfies the condition as described in Lemma~\ref{lemma:hypergraph_component} with $P \{\mathcal{E}_H\}  =1- O(1/k)$. We first bound the detection error probability conditioned on $\mathcal{E}_H$. Suppose $\mathcal{E}_H$ holds, then $\mathcal{P}(i,j) \leq 2$, otherwise the component is not unicyclic. Moreover the largest component $D(j)$ contains $O(\log k)$ signal nodes. Therefore, conditioned on $\bG$, $p_i$ is Gaussian distributed with zero mean and the variance of can be upper bounded as
\begin{align}
\label{eq:var_ub} {\rm var} (p_i) &\leq \left( 1+ \sum_{\ell \in D(i)} P^2(\ell,j) \right) \frac{\sigma^2}{c} = O\left( \log k \frac{\sigma^2}{c} \right).
\end{align}

\begin{lemma}\label{lemma:singleton_estimate}
Conditioned on that $\mathcal{E}_H$ holds, given any $\delta >0$ and $|x_i| \geq \delta$, $\forall i \in {\rm supp} (\bx)$, the recovery algorithm can correctly identity the signal support with probability $1 - O(1 / n)$ with some $c = O(\log^2 n)$.
\end{lemma}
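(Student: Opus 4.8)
The plan is to reduce the lemma to showing that \emph{every} call to Algorithm~\ref{algo:bin_detect} made during the peeling process returns the correct verdict, and then to union-bound over those calls. Conditioned on $\mathcal{E}_H$, the structure guaranteed by Lemma~\ref{lemma:hypergraph_component} — every component a tree or unicyclic, with $O(\log k)$ signal nodes — together with the framework of \cite{pawar2012hybrid,li2014sub} (robust bin detection plus $b=O(k)$ suitably large implies the peeling recovers ${\rm supp}(\bx)$) means it suffices to control the singleton test. The scheme parameters — the threshold $\tau$, the subcode rate $R$, and the constants hidden in $c_0=\Theta(\log n)$, $c_1=\Theta(\log n)$ and $c_2=\Theta(\log^2 n)$ — are to be fixed as functions of $\delta$ and $\sigma$ at the end.

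First I would pin down the effective noise the test sees. When $x_i$ is recovered from bin $m(i)$, after the already-recovered neighbours of $m(i)$ have been peeled the residual is $\by_{m(i)} = x_i\bg_i + \bz_{m(i)} + \bq_{m(i)}$, where by \eqref{eq:err_prop_rule1}--\eqref{eq:err_prop_rule2} and the induced formula for $p_\ell$ each coordinate of $\bq_{m(i)}$ is a fixed linear combination of the $O(\log k)$ point errors $e_\ell$ living in the component $D(m(i))$, with coefficients bounded by a constant because $\mathcal{P}(\ell,\cdot)\le 2$ and $|d_{\ell,p}|\le 1$. Since $e_\ell=-c^{-1}\bg_\ell^\dagger\bz_{m(\ell)}$ are independent $\mathcal{N}(0,\sigma^2/c)$ conditioned on $\bG$ (the bins $m(\ell)$ are distinct), conditioned on $\bG$ and $\mathcal{E}_H$ each coordinate of $\bq_{m(i)}$ is centred Gaussian with variance $O(\log k\,\sigma^2/c)$; with $c\ge c_2=\Theta(\log^2 n)$ and $\log k\le\log n$ this is $o(\sigma^2/\log n)$, so $\mathsf{P}\{\|\bq_{m(i)}\|_\infty > \delta/2\} = n^{-\Omega(1)}$ and the total per-coordinate noise variance of $\by_{m(i)}$ is $\sigma^2(1+o(1))$.

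Next I would bound, for one call, each way the test can err, conditioning also on the ``good $\bG$'' event $\{\,|\dot{\bg}_i^\dagger\dot{\bg}_{i'}|\le\sqrt{C\log n\,c_2}\ \text{for all}\ i\ne i'\,\}$, which by Hoeffding and a union bound over pairs fails with probability $n^{-\Omega(1)}$ for $C$ large. (i) Sign: using $\bar{\bg}_{i'}\equiv\mathbf{1}$, one has $\bar{\bg}_i^\dagger\bar{\by}_{m(i)}\sim\mathcal{N}(c_1 x_i,\,O(c_1\sigma^2))$, and since $|x_i|\ge\delta$ the sign is wrong with probability $e^{-\Omega(c_1\delta^2/\sigma^2)}=n^{-\Omega(1)}$. (ii) Index: given the correct sign and the event $\|\bq_{m(i)}\|_\infty\le\delta/2$, the bits $sgn(s\cdot\tilde{\by}_{m(i)})$ equal the coded bits $\tilde{\bg}_i$ seen through a BSC whose flips are, conditionally on $\bq_{m(i)}$, \emph{independent} across coordinates (the residual randomness $\tilde{\bz}_{m(i)}$ is i.i.d.) with crossover probability at most $q=\mathsf{P}\{|\mathcal{N}(0,\sigma^2)|>\delta/2\}<\tfrac12$; choosing $R$ below the capacity of ${\rm BSC}(q)$, the decoder of \cite{barg2004error} errs with probability $e^{-\Omega(c_0)}=n^{-\Omega(1)}$. (iii) Zeroton/multiton verdict: for a true zeroton $\|\dot{\by}\|^2\sim\sigma^2\chi^2_{c_2}$ concentrates below $c_2(1+\tau)\sigma^2$; for a genuine singleton $\|\dot{\by}\|^2\ge c_2\delta^2 + c_2\sigma^2(1-o(1))$ exceeds that, while its verification residual $\|\dot{\by}-z'\dot{\bg}_i\|^2 = \|(\bI-c_2^{-1}\dot{\bg}_i\dot{\bg}_i^\dagger)\dot{\bn}\|^2 = c_2\sigma^2(1+o(1))$ stays below the verification threshold; and for a true multiton, or for any wrongly decoded index, the verification residual is at least $(c_2-1)\sigma^2(1-o(1)) + \Omega(c_2\delta^2)$, above the threshold, using the good-$\bG$ bound on $\dot{\bg}_i^\dagger\dot{\bg}_{i'}$ and concentration of quadratic forms in Rademacher vectors. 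Provided $0<\tau<\delta^2/\sigma^2$ and the constant in $c_2=\Theta(\log^2 n)$ is large enough (relative to $\delta/\sigma$ and $C$), items (i)--(iii) all succeed with probability $1-n^{-\Omega(1)}$; the underlying concentration estimates are routine and I would not expand them.

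Finally, the opening sweep tests all $b=O(k)$ bins, and since the left-degree is $O(1)$ and the algorithm recovers $O(k)$ signals in $O(k)$ iterations, Algorithm~\ref{algo:bin_detect} is invoked $O(k)=O(n)$ times in all; each call, and the bad-$\bG$ and bad-graph events, fails with probability $n^{-c'}$ for a $c'$ that can be made as large as desired by enlarging the constants in $c_0,c_1,c_2,R$ and $\tau$, so taking $c'\ge 3$ and summing yields total failure probability $O(1/n)$, which proves the lemma. The main obstacle is item (ii): the propagated error makes the per-coordinate bin noise correlated, so the literal BSC model does not apply; the fix is exactly the conditioning on $\|\bq_{m(i)}\|_\infty\le\delta/2$, after which the residual randomness is the i.i.d.\ Gaussian $\tilde{\bz}_{m(i)}$ and standard BSC decoding bounds take over. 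This is also where the length $c=\Theta(\log^2 n)$, rather than $\Theta(\log n)$, is essential: with $c=\Theta(\log n)$ the ratio $\log k/c$ need not vanish, the propagated error becomes comparable to $\delta$, and the crossover probability cannot be kept below $1/2$.
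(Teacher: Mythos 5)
Your proposal is correct and follows essentially the same route as the paper: conditioned on $\mathcal{E}_H$, use the error-propagation variance bound \eqref{eq:var_ub} to show each singleton test sees effective noise of essentially constant variance, bound the sign error and the BSC-decoding error at $n^{-\Omega(1)}$ with $c_1,c_0=\Theta(\log n)$, treat zeroton/multiton/verification by standard concentration, and union bound over the $O(k)$ (the paper, more crudely, $O(k^2)$) singleton tests. The only substantive refinement is that you explicitly repair the independence of the bit flips in the BSC reduction by conditioning on $\|\bq_{m(i)}\|_\infty\leq\delta/2$ (and add a good-$\bG$ event for the Rademacher cross-correlations), points the paper handles only implicitly via the variance-domination remark and citations to \cite{chen2015robust,li2014sub}.
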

\begin{proof}
The support detection may be subject to zeroton, multiton and singleton detection errors. The error probability of detecting zerotons and multitons can be upper bounded by $O(1/n)$ following similar steps in~\cite{li2014sub} and is omitted due to space limitation. We focus on the singleton detection.

Suppose the measurement vector of a singleton is given by
\begin{align}
\by_j = x_i \bg_i + \bw_j,
\end{align}
where the entries in $\bw_j$ are i.i.d. Gaussian variable with zero mean and variance $\tilde{\sigma}^2$. Then the error probability of sign estimation is calculated as
\begin{align}~\label{eq:err_sign}
\mathsf{P} \{sgn (\mathbf{1}^{\dagger} \bar{\by}) \neq sgn(x_i) \} &= \mathsf{P} \{ \mathbf{1}^{\dagger}  \bar{\bw} / c_1 \geq |x_i| \} \\
&= Q( \sqrt{c_1} |x_i| / \tilde{\sigma}),
\end{align}
where $Q(x)$ is the $Q$-function for standard normal distribution.

Suppose the signs of $x_i$ is correctly detected and we want to detect $(i)_2$ by recovering the signs of $\tilde{\bg}_i$. By compensating the signs of $x_i$ as $ sgn(x_i) \by_j$, the random transformation
\begin{align}
sgn (|x_i| \tilde{\bg}_i ) \to sgn( |x_i| \tilde{\bg}_i + \tilde{\bw}_j )
\end{align}
is equivalent to transmission over a BSC with crossover probability less than $Q(|x_i| / \tilde{\sigma})$~\cite{chen2015robust}.

From the recovery process, $\bw_j$ is the noise plus residual estimate errors given by $\bw_j = \bz_j + \sum_{\ell \in \text{in}(j)} p_{\ell} \bg_{\ell}$. According to \eqref{eq:var_ub}, for some $c = O \left( \log^2 n \right)$ and a large enough $n$, the variance of $\bw_j$ is dominated by that of $\bz_j$. The entries of $\bw_j$ have a variance $\tilde{\sigma}^2$ bounded by some constant. Therefore, given that $|x_i| \geq \delta$ for some $\delta$, the worst-case SNR for every singleton estimation is lower bounded by some constant. The error probability of sign estimation~\eqref{eq:err_sign} is $O(1/n^3)$ with some $c_1 = O(\log n)$. Applying an error control code of length $c_0 =  \log n /R$ with a low enough code rate $R$, $(i)_2$ can be decoded correctly with probability $1- O(1/n^3)$.

Note that if $\mathcal{E}_H$ holds and the singleton, multiton and zeroton bins are correctly estimated, the peeling decoder terminates by recovering every nonzero signal entry~\cite{price2011efficient}. Since there are at most $O(k)$ iterations and every iteration involves at most $O(k)$ singletons, the error probability can be upper bounded by $O(k^2 / n^3) = O(1/n)$ using the union bound. Moreover, conditioned on that $i$ is correctly estimated, the probability that the singleton verification is not passed is equivalent to a zeroton detection error, which can be upper bounded by $O(1/n)$. The lemma is hence established.
\end{proof}

Support recovery fails only if either $\mathcal{E}_H$ does not hold or that a bin detection error occurs conditioned on $\mathcal{E}_H$ holds. By Lemma~\ref{lemma:hypergraph_component} and Lemma~\ref{lemma:singleton_estimate}, the overall error probability of support recovery is $O \left( \frac{1}{n} + \frac{1}{k} \right)$, vanishing as $k$ increases.

The error probability can be upper bounded as
\begin{align}
\mathsf{P} \left\{ |x_i - \hat{x_i}|^2 \geq \epsilon \right\} & \leq  \mathsf{P} \left\{ \mathcal{E}_H^c \right\} + \mathsf{P} \left\{ | p_i | \geq \sqrt{\epsilon} | \mathcal{E}_H \right\} \\
\label{eq:mse} & \leq \mathsf{P} \left\{ \mathcal{E}_H^c \right\} + 2
                 Q \left( \sqrt{ \frac{\epsilon c}{\log k \sigma^2} }
                 \right) %.
\end{align}
where \eqref{eq:mse} follows because $p_i$ is Gaussian variable with zero mean and variance upper bounded by \eqref{eq:var_ub} conditioned on $\mathcal{E}_H$ and every realization of $\bG$. By Lemma~\ref{lemma:hypergraph_component}, the error probability \eqref{eq:mse} is smaller than any $\epsilon$ with a large enough $k$ and some $c = O \left( \log^2 n \right)$. Hence, Theorem~\ref{thm:cs_general} is established.

\section{Simulation}

Throughout the simulation, we assume that the nonzero signal amplitude is taken uniformly at random from $[1, 10]$ and define SNR = $1/\sigma^2$, which is the worst-case SNR. The signal dimension is $n = 10^{10}$. The number of measurement bins is chosen to be $b = 3 k$.  We adopt a regular random LDPC code with rate $1/2$ as subcode to encode the signal support information, and thus $c_0 = 2 \log n$. We let $c_1 = \log n$ and $c_2 = 2 \log n$. Fig.~\ref{fig:errprob} and Fig.~\ref{fig:rmse} plot the error probability of support recovery and relative mean square error, respectively. The relative mean square error is only calculated and averaged over the signals with their support correctly estimated. We run 200 simulations for each SNR. In the simulation, for every sparsity level $k$, the error-control code and nonzero signal entries are generated once and fixed.

Although analysis shows that $c = O \left( \log^2 n \right)$ is sufficient to guarantee vanishing error probability, choosing $c = O(\log n)$ also gives a good performance. The error probability of support recovery and relative mean square error decreases as SNR increases. In order to achieve more reliable signal recovery, we can adopt a more sophisticated error-control code or a code with lower code rate.

\begin{figure}
  % Requires \usepackage{graphicx}
  \centering
  \includegraphics[width=6cm]{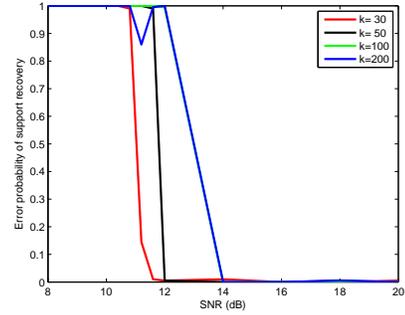}\\
  \caption{Error probability of support recovery.}\label{fig:errprob}
\end{figure}

\begin{figure}
  % Requires \usepackage{graphicx}
  \centering
  \includegraphics[width=6cm]{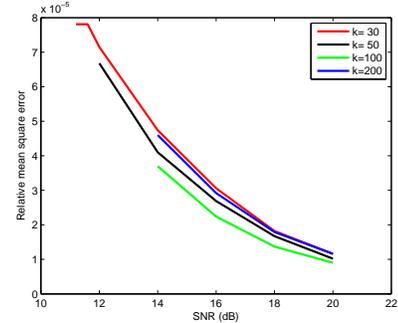}\\
  \caption{Relative mean square error.}\label{fig:rmse}
\end{figure}

\begin{comment}
\section{Conclusion}
\label{sec:conclude}

% In this paper,
A novel design of the measurement matrix for compressive sensing
\DG{has been} %is
proposed. The measurement matrix is inspired by the generalized LDPC code structure and consists of only $\{0, \pm 1 \}$ entries. The new measurement matrix design reduces the measurement number and computational complexity to $O (k \log n)$ for signals with known discrete alphabet and $O(k \log^2 n)$ for arbitrary alphabet. Moreover, the accumulated estimation error can be characterized with the new tool of message passing over the error propagation graph. The design framework have potential applications in sparse Fourier transform and Walsh-Hadamard transform with arbitrary signal alphabet.

\end{comment}

%\appendices

%\section{Appendix}

%appendix goes here

%\section{Acknowledge}

%The first author would like to thank Dr. Xiao Li from University of California at Berkeley for helpful discussions.

\bibliographystyle{IEEEtran}
\bibliography{IEEEabrv,all_bib,xu_bib}

% Include Figures here

% You can push biographies down or up by placing
% a \vfill before or after them. The appropriate
% use of \vfill depends on what kind of text is
% on the last page and whether or not the columns
% are being equalized.

%\vfill

% Can be used to pull up biographies so that the bottom of the last one
% is flush with the other column.
%\enlargethispage{-5in}

% that's all folks
\end{document}